\newcommand{\blind}{0}
\newcommand {\ctn}{\citet}
\renewcommand{\a}{\ensuremath{\alpha}}
\renewcommand{\d}{\ensuremath{\delta}}
\newcommand{\e}{\ensuremath{\epsilon}}
\newcommand{\bm}{\mathbf}
\renewcommand{\l}{\lambda}
\newcommand {\bzero}{\mbox{\boldmath $0$}}
\newcommand {\bxi}{\mbox{\boldmath $\xi$}}
\newcommand {\bdelta}{\mbox{\boldmath $\delta$}}
\newcommand {\bmu}{\mbox{\boldmath $\mu$}}
\newcommand {\bSigma}{\mbox{\boldmath $\Sigma$}}
\newcommand{\dint}{\int\displaylimits}
\newcommand{\boldm}{\mathbf m}
\newcommand{\bI}{\mathbf I}
\newcommand{\bM}{\mathbf M}
\newcommand{\bp}{\mathbf p}
\newcommand{\bq}{\mathbf q}
\newcommand{\bv}{\mathbf v}
\newcommand{\bw}{\mathbf w}
\newcommand{\bx}{\mathbf x}
\newcommand{\by}{\mathbf y}
\newcommand{\bz}{\mathbf z}
\newcommand{\be}{\pmb\e}
\newtheorem{theorem}{Theorem}
\newtheorem{proof}{Proof}
\newtheorem{definition}{Definition}
\newcommand{\topline}{\hrule height 1pt width \textwidth \vspace*{2pt}}
\newcommand{\botline}{\vspace*{2pt}\hrule height 1pt width \textwidth \vspace*{4pt}}
\newtheorem{algo}{Algorithm} 
\numberwithin{equation}{section}
\numberwithin{algo}{section}
\numberwithin{table}{section}
\numberwithin{figure}{section}
\newcommand{\statesp}{\ensuremath{\mathcal X}}
\newcommand{\Y}{\ensuremath{\mathcal Y}}
\newcommand{\D}{\ensuremath{\mathcal D}}
\newcommand{\imply}{\Longrightarrow}
\newcommand{\R}{\ensuremath{\mathbb R}}
\newcommand{\supr}[2]{{#1}^{(#2)}}
\newcolumntype{Z}{>{\centering\arraybackslash}X}%
\date{}
\begin{document}

\bibliographystyle{natbib}

\def\spacingset#1{\renewcommand{\baselinestretch}%
{#1}\small\normalsize} \spacingset{1}


\if0\blind
{
  \title{\bf Supplement to ``Markov Chain Monte Carlo Based on Deterministic Transformations"}
   \author{Somak Dutta\thanks{Corresponding e-mail: sdutta@galton.uchicago.edu}\\
    Department of Statistics\\
    University of Chicago\\
    and\\
   Sourabh Bhattacharya\\
   Bayesian and Interdisciplinary Research Unit\\
  Indian Statistical Institute
 }
\maketitle
 } \fi

\if1\blind
{
 \bigskip
 \bigskip
 \bigskip
 \begin{center}
 {\LARGE\bf Markov Chain Monte Carlo Based on Deterministic Transformations}
 \end{center}

  \medskip
  } \fi

  \bigskip

\renewcommand\thefigure{S-\arabic{figure}}
\renewcommand\thetable{S-\arabic{table}}
\renewcommand\thesection{S-\arabic{section}}

\spacingset{1.45}  

Throughout, we refer to our main article \ctn{Dutta13} as DB. 

\section{Proof of detailed balance for TMCMC}
\label{sec:detailed_balance}
The detailed balance condition is proved as follows: Suppose $\by = T_{\bz}(\bx,\be) \in T_{\bz}(\bx,\Y)$, 
then $\bx = T_{\bz^c}(\by,\be)$. Hence, the kernel $K$ satisfies,
\begin{eqnarray*}
 \pi(\bx)K(\bx \to \by) & = & \pi(\bx)~P(T_{\bz})~g(\be)\min\left\{1,\dfrac{P(T_{\bz^c}) \pi(\by)}
 {P(T_{\bz})\pi(\bx)}J_{\bz}(\bx,\be)\right\} \\
& = & g(\be)~\min\left\{\pi(\bx)~P(T_{\bz}),\pi(\by)P(T_{\bz^c})~J_{\bz}(\bx,\be)\right\}
\end{eqnarray*}
and
\begin{eqnarray*}
 \pi(\by)K(\by \to \bx) & = & \pi(\by)~P(T_{\bz^c})~g(\be)J_{\bz}(\bx,\be)\min\left\{1,\dfrac{P(T_{\bz})
 \pi(\bx)}{P(T_{\bz^c})\pi(\by)}J_{\bz^c}(\by,\be)~\right\} \\
& = & g(\be)~\min\left\{\pi(\by)~P(T_{\bz^c})J_{\bz}(\bx,\be),\pi(\bx)P(T_{\bz})\right\}
\end{eqnarray*}
where $J_{\bz}(\bx,\be) = \left| \partial (T_{\bz}(\bx,\be),\be)/\partial (\bx,\be)\right|$ satisfies
\[ J_{\bz^c}(T_{\bz}(\bx,\be),\be) \times J_{\bz}(\bx,\be) = 1 \quad\textrm{ since }\quad  
T_{\bz^c}(T_{\bz}(\bx,\be),\be) = \bx.  \]

\section{General TMCMC algorithm based on a single $\e$}
\label{sec:general_tmcmc}

\begin{algo}\label{algo:GTM:multivar2} \topline General TMCMC algorithm based on a single $\e$.
\botline \normalfont \ttfamily
\begin{itemize}
 \item Input: Initial value $\supr{\bm x}{0}$, and number of iterations $N$. 
 \item For $t=0,\ldots,N-1$
\begin{enumerate}
 \item Generate $\e \sim g(\cdot)$ and an index $i \sim \mathcal M(1;p_1,\ldots,p_{3^k-1})$ independently.
 Again, actual simulation from the high-dimensional multinomial distribution is not necessary; 
 see Section 3.1 of DB. 
 \item \[ \bm x' = T_{\bz_i}(\supr{\bm x}{t}, \e) \quad \textrm{ and } 
 \quad \alpha(\supr{\bm x}{t}, \e) = \min\left(1, \dfrac{P(T_{\bz^c_i})}{P(T_{\bz_i})} ~\dfrac{\pi(\bm x')}{\pi(\supr{\bm x}{t})} 
 ~\left|\frac{\partial (T_{\bz_i}(\supr{\bm x}{t}, \e),\e)}{\partial(\supr{\bm x}{t}, \e)}\right| \right)\]
\item Set \[ \supr{\bm x}{t+1}= \left\{\begin{array}{ccc}
 \bm x' & \textsf{ with probability } & \a(\supr{\bm x}{t},\e) \\
 \supr{\bm x}{t}& \textsf{ with probability } & 1 - \a(\supr{\bm x}{t},\e)
\end{array}\right.\]
\end{enumerate}
\item End for
\end{itemize}
\botline \rmfamily
\end{algo}

\section{Convergence properties of additive TMCMC}\label{sec:theorems}
In this section we prove some convergence properties of the TMCMC in the case of the additive transformation. 
%
Before going into our main result we first borrow some definitions from the MCMC literature. 
\begin{definition}[Irreducibility] A Markov transition kernel $K$ is $\varphi-$irreducible, where $\varphi$ is a nontrivial measure, if for every $x \in \statesp$ and for every measurable set $A$ of $\statesp$ with $\varphi(A) > 0$, 
there exists $n\in \mathbb N$, such that $K^n(x,A) > 0.$
\end{definition}
\begin{definition}[Small set] A measurable subset $E$ of $\statesp$ is said to be small if there is an $n \in \mathbb N$, a constant $c > 0$, possibly depending on $E$ and a finite measure $\nu$ such that
\[ K^n(x,A) \geq c~\nu(A), \qquad \forall ~A \in \mathcal{B}(\statesp),~\forall ~x \in E\]
\end{definition}

\begin{definition}[Aperiodicity] A Markov kernel $K$ is said to be periodic with period $d > 0$ if the state-space $\statesp$ can be partitioned into $d$ disjoint subsets $\statesp_1,\statesp_2,\ldots,\statesp_d$ with 
\[ K(x,\statesp_{i+1}) = 1 ~\forall ~x\in \statesp_i, ~i=1,2,\ldots,d-1\]
and $K(x,\statesp_1) = 1~\forall ~x \in \statesp_d$.

A Markov kernel $K$ is aperiodic if for no $d\in\mathbb N$ it is periodic with period $d$.
\end{definition}

\subsection{Additive transformation with singleton $\e$}
Consider now the case where $\statesp = \R^k$, $\D = \R$ and 
$T_{\bz}(\mathbf x,\e) = (x_1+z_1a_1\e,x_2+z_2a_2\e,\ldots,x_k+z_ka_k\e)$ 
where, for $i=1,\ldots,k$, $z_i=\pm 1$, and $a_i>0$. In this case $\Y = [0,\infty)$. 
Suppose that $g$ is a density on $\Y$.

\begin{theorem}
 Suppose that $\pi$ is bounded and positive on every compact subset of $\R^k$ and that $g$ is 
 positive on every compact subset of $(0,\infty)$. Then the chain is $\l$-irreducible, aperiodic. 
 Moreover every nonempty compact subset of $\R^k$ is small.
\end{theorem}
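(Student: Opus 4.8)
The plan is to use that, for the additive transformation, the Jacobian $\left|\partial(T_{\bz}(\bx,\e),\e)/\partial(\bx,\e)\right|$ is identically $1$, so the acceptance probability reduces to $\a(\bx,\e)=\min\bigl\{1,\ [P(T_{\bz^c})/P(T_{\bz})]\,\pi(\bx')/\pi(\bx)\bigr\}$, together with three elementary consequences of the hypotheses: $\pi>0$ everywhere on $\R^k$, $g>0$ everywhere on $(0,\infty)$, and the finitely many move probabilities that occur are all positive. I would then fix once and for all a $k\times k$ matrix $Z$ with entries in $\{-1,+1\}$ and $\det Z\neq0$ (such a matrix exists for every $k\ge1$), write its columns as $\supr{\bz}{1},\dots,\supr{\bz}{k}$, set $\bm a=(a_1,\dots,a_k)$, and let $\bm a\odot(\cdot)$ denote coordinatewise multiplication, so that the proposal with sign vector $\bz$ and increment $\e$ is $\bx\mapsto\bx+\e\,\bm a\odot\bz$.

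\emph{Irreducibility.} From a current state $\bx$ I would examine the distinguished block of $2k$ successive proposals using the sign vectors $\supr{\bz}{1},\dots,\supr{\bz}{k},-\supr{\bz}{1},\dots,-\supr{\bz}{k}$, with increments $\e_1,\dots,\e_{2k}$, conditioned on all of them being accepted. Writing $\be'=(\e_1,\dots,\e_k)$ and $\be''=(\e_{k+1},\dots,\e_{2k})$, the endpoint is $F_{\bx}(\be',\be'')=\bx+\bm a\odot Z(\be'-\be'')$, a smooth submersion of $(0,\infty)^{2k}$ \emph{onto} all of $\R^k$, since $(\be',\be'')\mapsto\be'-\be''$ maps $(0,\infty)^{2k}$ onto $\R^k$ and $\bm v\mapsto\bx+\bm a\odot Z\bm v$ is a diffeomorphism of $\R^k$. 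Restricting $K^{2k}(\bx,\cdot)$ to this single branch gives $K^{2k}(\bx,\cdot)\ge\mu_{\bx}:=(F_{\bx})_{*}(\rho_{\bx}\,d\be'\,d\be'')$, where $\rho_{\bx}$ is the product of the move probabilities, the factors $g(\e_1)\cdots g(\e_{2k})$, and the acceptance probabilities along the path; by the positivity facts, $\rho_{\bx}>0$ throughout $(0,\infty)^{2k}$. Pushing a positive density forward under a submersion onto $\R^k$ (coarea formula; the Gram-determinant factor of $F_{\bx}$ is a nonzero constant) yields $\mu_{\bx}\ll\l$ with a density $h_{\bx}$ that is strictly positive everywhere on $\R^k$; hence $\mu_{\bx}\sim\l$ and $K^{2k}(\bx,A)\ge\mu_{\bx}(A)>0$ whenever $\l(A)>0$, i.e.\ the chain is $\l$-irreducible.

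\emph{Aperiodicity.} I would rerun the construction with the $(2k{+}1)$-step block obtained by appending one more proposal with sign vector $\supr{\bz}{1}$; the same argument gives $K^{2k+1}(\bx,\cdot)\ge\mu'_{\bx}$ with $\mu'_{\bx}\sim\l$. If the chain had period $d\ge2$ with cyclic partition $\statesp_1,\dots,\statesp_d$, then for $\bx\in\statesp_1$ the measure $K^{2k}(\bx,\cdot)$ lives on one cell $\statesp_a$ and $K^{2k+1}(\bx,\cdot)$ on a different cell $\statesp_b$ (consecutive integers are incongruent modulo $d$), with $\statesp_a\cap\statesp_b=\emptyset$. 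But $K^{2k}(\bx,\cdot)\ge\mu_{\bx}\sim\l$ forces $\l(\statesp_a)>0$, hence $\mu'_{\bx}(\statesp_a)>0$ and $K^{2k+1}(\bx,\statesp_a)>0$, contradicting that $K^{2k+1}(\bx,\cdot)$ is supported on the disjoint set $\statesp_b$. Therefore $d=1$.

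\emph{Smallness, and the main obstacle.} For a nonempty compact $E$, fix a closed ball $E'\supseteq E$; it is enough to bound $h_{\bx}(\by)$ below, uniformly over $\bx\in E$ and $\by\in E'$. There the fibre $F_{\bx}^{-1}(\by)$ equals $\{(\bm c+\bm h+\bm w,\ \bm c+\bm h):\bm h\in\R^k\}$, where $\bm w=Z^{-1}\bm d$ and $\bm d=((y_1-x_1)/a_1,\dots,(y_k-x_k)/a_k)$; since $\|\bm w\|$ is bounded over $E\times E'$, a single large constant vector $\bm c$ keeps the slice $\{\|\bm h\|\le\d\}$ of the fibre, and every intermediate state generated along it, inside one fixed compact subset of $(0,\infty)^{2k}$, respectively of $\R^k$, on which $g$, $\pi$, and therefore every acceptance factor, are bounded below by positive constants. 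Integrating $\rho_{\bx}/J_{F_{\bx}}$ over just that slice gives $h_{\bx}(\by)\ge c>0$ uniformly, so $K^{2k}(\bx,A)\ge c\,\l(A\cap E')$ for every $\bx\in E$ and every measurable $A$; that is, $E$ is small with $n=2k$ and minorizing measure $\l|_{E'}$. The step I expect to require the most care is precisely this one — turning ``$\rho_{\bx}>0$, $F_{\bx}$ a submersion'' into a strictly positive Lebesgue density and then into a \emph{uniform} lower bound, which forces one to choose the fibre parametrisation (the constant $\bm c$ and radius $\d$) so that the relevant slice and all intermediate TMCMC states stay in a common compact set as $(\bx,\by)$ ranges over $E\times E'$. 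Irreducibility and aperiodicity are then essentially corollaries of the full-support property of $\mu_{\bx}$ and $\mu'_{\bx}$.
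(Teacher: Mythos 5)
Your proof is correct, but it follows a genuinely different route from the paper's. The paper proves the minorization in just two steps (written out for $k=2$, after reducing to $a_i=1$): it decomposes the two-step transition over the eight sign-matrix branches $M_i,\tilde M_i$ of \eqref{formula:rotmats1}, chooses the minorizing set $C$ to be a compact rectangle with sides parallel to the diagonals so that every intermediate state automatically stays inside $C$, bounds $g$, $\pi$ and the move probabilities on $C$, and converts $\sum_i\l(A_i)$ into $\l(A^*)$ by an elementary change of variables, obtaining $K^2(\bx,A)\ge c\,\l_C(A)$ with explicit $c$; irreducibility and aperiodicity then follow in one line each. You instead run a single distinguished $2k$-step branch built from $\pm$-paired columns of a nonsingular $\pm1$ matrix $Z$, push the (positive) path density forward through the affine submersion $F_{\bx}$ via the coarea formula to get an everywhere-positive Lebesgue density, and extract the uniform minorization from a compact slice of the fibre; aperiodicity is handled by the standard cyclic-class contradiction with the $2k$- and $(2k{+}1)$-step kernels. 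What each buys: the paper's argument is shorter, uses only elementary change of variables, and achieves the minimal order $n=2$ with explicit constants, but it is written only for $k=2$, and its remark that the general case is "similar" hides a real point — for $k\ge3$ two steps reach only a finite union of two-dimensional cones (a $\l$-null set), so at least $k$ steps with linearly independent sign vectors are needed, which is exactly what your construction supplies; your version is dimension-uniform and avoids branch bookkeeping, at the cost of a larger minorization index ($2k$ rather than $2$ or $k$) and heavier machinery (coarea/pushforward). One small caveat, shared with the paper: positivity of the relevant move-type probabilities is not a consequence of the stated hypotheses but an implicit standing assumption (the paper's $\underline p>0$), so you should state it as such rather than as something the hypotheses deliver.
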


\begin{proof}
Without loss we may assume that $a_i=1;$ $i=1,\ldots,k$. 
For notational convenience we shall prove the theorem for $k = 2$. The general case can be 
seen to hold with suitably defined `rotational' matrices on $\R^k$ similar to \eqref{formula:rotmats1}.

Suppose $E$ is a nonempty compact subset of $\R^k$. Let $C$ be a compact rectangle whose sides are parallel to the diagonals $\{(x,y) ~:~ |y| = |x|\}$ and containing $E$ such that $\l(C) > 0$. We shall show that $E$ is small, i.e., $\exists ~c > 0$ such that 
\[ K^2(\bx,A) \geq c \l_C(A) \qquad \forall A \in \mathcal B(\R^2) \textrm{ and } \forall x \in E.\]
It is clear that the points reachable from $\bx$ \emph{in two steps} are of the form
\[\begin{pmatrix} x_1 \pm \e_1 \pm \e_2 \\ x_2 \pm \e_1 \pm \e_2\end{pmatrix},\qquad \e_1 \geq 0, \e_2 \geq 0 \]
Thus, if we define the matrices
\begin{equation}\label{formula:rotmats1}
\begin{split}
 & M_1 = \begin{pmatrix} 1 & 1 \\ 1 & -1 \end{pmatrix} \quad M_2 = \begin{pmatrix} -1 & 1 \\ 1 & 1 \end{pmatrix} \quad M_3 = \begin{pmatrix} 1 & -1 \\ -1 & -1 \end{pmatrix} \quad M_4 = \begin{pmatrix} -1 & -1 \\ -1 & 1 \end{pmatrix} \\
& \tilde M_1 = \begin{pmatrix} 1 & 1 \\ -1 & 1 \end{pmatrix} \quad \tilde M_2 = \begin{pmatrix} 1 & -1 \\ 1 & 1 \end{pmatrix} \quad \tilde M_3 = \begin{pmatrix} -1 & 1 \\ -1 & -1 \end{pmatrix} \quad \tilde M_4 = \begin{pmatrix} -1 & -1 \\ 1 & -1 \end{pmatrix}
\end{split}
\end{equation}
then the points reachable from $\bx$ \emph{in two steps}, other than the points lying on the diagonals passing through $\bx$ itself, are of the form
\[ \bx + M_i\left(\begin{smallmatrix} \e_1 \\ \e_2\end{smallmatrix}\right) \quad\textrm{ and }\quad  \bx + \tilde M_i\left(\begin{smallmatrix} \e_1 \\ \e_2\end{smallmatrix}\right), \quad \e_1 > 0, \e_2 > 0,~i=1,\ldots,4.\]

Define
\[ m = \inf_{\by \in C}\pi(\by) > 0 \qquad M = \sup_{\by \in C}\pi(\by) < \infty \qquad a = \inf_{0 < \e < R}g(\e) > 0\]
where $R$ is the length of the diagonal of the rectangle $C$\footnote{Actually $R/\sqrt{2}$ suffices.}.
Fix an element $\bx \in E$. For any set $A \in \mathcal B(\R^2)$, let $A^* = A\cap C$ and define,
\begin{equation}
\begin{split}
A_i & = \{ \be \in (0,\infty)^2 ~:~ \bx + M_i \be\in A^* \} \\
\tilde A_i & = \{ \be \in (0,\infty)^2 ~:~ \bx + \tilde M_i \be\in A^* \}
\end{split}
\end{equation}
The need for defining such sets illustrated in the following example: to make a transition from the state $\bx$ to a state in $A^*$ in two steps, first making a forward transition in both coordinates and then a forward transition in first coordinate and a backward transition in the second coordinate is same as applying the transformation $\bx \to \bx + M_1\be$ for some $\be \in A_1$ in two steps, i.e. first 
\[\bx \to \bx + M_1(\e_1,0)^T = \bx + (\e_1,\e_1)^T \quad\textrm{ then } \quad \bx + M_1(\e_1,\e_2)^T \to \bx + M_1\be\]
Also note that for any $\be = (\e_1,\e_2)\in A_i$, $A^* \subset C$ implies that the intermediate point $\bx + M_i(\e_1,0)^T \in C$ and similarly for $\tilde A_i~(i=1,\ldots,4)$.
Now, with ${\underline p}$ and $\bar p$ as the minimum and maximum of the move probabilities. 
\begin{eqnarray}\label{eqn:smallsetadditive}
 & & K^2(\bx,A) \geq K^2(\bx,A^*) \nonumber \\
& \geq & {\underline p}^2\sum_{i=1}^4 \dint_{A_i} g(\e_1)g(\e_2)\min\left\{ \frac{{\underline p}\pi(\bx + M_i(\e_1,0)^T)}{{\bar p}\pi(\bx)},1\right\}\min\left\{ \frac{{\underline p}\pi(\bx + M_i(\e_1,\e_2)^T)}{{\bar p}\pi(\bx + M_i(\e_1,0)^T)},1\right\}d\e_1d\e_2 \nonumber\\
& + & {\underline p}^2\sum_{i=1}^4 \dint_{\tilde A_i} g(\e_1)g(\e_2)\min\left\{ \frac{{\underline p}\pi(\bx + \tilde M_i(\e_1,0)^T)}{{\bar p}\pi(\bx)},1\right\}\min\left\{ \frac{{\underline p}\pi(\bx + \tilde M_i(\e_1,\e_2)^T)}{{\bar p}\pi(\bx + \tilde M_i(\e_1,0)^T)},1\right\}d\e_1d\e_2 \nonumber \\
& \geq & {\underline p}^2 a^2\left(\min\left\{\dfrac{\underline p m}{\bar p M},1\right\}\right)^2\left(\sum_{i=1}^4\l(A_i) + \sum_{i=1}^4\l(\tilde A_i) \right) \nonumber \\
& = & {\underline p}^2 a^2\left(\min\left\{\dfrac{\underline p m}{\bar p M},1\right\}\right)^2 \times 2 \times \sum_{i=1}^4\l(A_i)
\end{eqnarray}
Since $(\e_1,\e_2) \in A_i \iff (\e_2,\e_1) \in \tilde A_i$, so that, $\l(A_i) = \l(\tilde A_i)$.
Now notice that, if we define for $i=1,\ldots,4$
\[ f_i : (0,\infty)^2 \to \R^2 \ni \be \mapsto \bx + M_i\be \]
 and 
\[ A_\bx = \{ (\e,0)^T ~:~ \e > 0, (x_1 \pm \e, x_2 \pm \e) \in A^* \}\]
then,
\[A^* = \bigcup_{i=1}^4f_i(A_i\cup A_x) \quad\imply\quad \l(A^*) = \sum_{i=1}^4f_i(A_i) \quad = \quad 2 \times \sum_{i=1}^4\l(A_i),\]
since, $f_i(A_i)$'s are pairwise disjoint, $\l(f_i(A_x)) = 0$ and $\l(f_i(A_i)) = 2\l(A_i)$ for $1\leq i \leq 4$. It follows from \eqref{eqn:smallsetadditive} that
\[ K^2(\bx,A) \quad\geq \quad {\underline p}^2 a^2\left(\min\left\{\dfrac{\underline p m}{\bar p M},1\right\}\right)^2 \l(A^*) \quad = \quad c\l_C(A) \]
where $c = {\underline p}^2 a^2\left(\min\left\{\dfrac{\underline p m}{\bar p M},1\right\}\right)^2 > 0$. \\ This completes the proof that $E$ is small.

That the chain is irreducible, follows easily, for any $\bx$, the set $\{\bx\}$ is a compact set and for a measurable set $A$ with $\l(A) > 0$ we may choose $C$ in the first part of the proof such that $\l(C\cap A) > 0$. Now,
\[ K^2(\bx,A) \quad \geq \quad c\l(C\cap A) > 0\]
Also aperiodicity follows trivially from the observation that any set with positive $\l$-measure can be accessed in at most 2 steps.
\end{proof}


\section{General TMCMC algorithm with single $\e$ and dependent $\bz$}
\label{sec:tmcmc_dependent_z}
Also, let 
Let $h_1(\bp)$, $h_2(\bq)$
be the specified joint distributions of $\bp$ and $\bq$ induced by the Gaussian distributions
of $\bw_1,\bw_2,\bw_3$, and
let $P(\bz\vert \bp,\bq)=\prod_{i=1}^kf_i(z_i\vert p_i,q_i)$ denote the conditional probability
of $\bz$, given $\bp,\bq$, where $f_i(\cdot\vert p_i,q_i)$ is the conditional
probability of $z_i$ given $p_i$ and $q_i$. 
Then the general TMCMC algorithm with singleton $\e$ and dependent $\bz$ is given as follows.

\begin{algo}\label{algo:GTM:multivar3} \topline General TMCMC algorithm based on single $\e$
and dependent $\bz$.
\botline \normalfont \ttfamily
\begin{itemize}
 \item Input: Initial value $\supr{\bm x}{0}$, and number of iterations $N$. 
\begin{enumerate}
 \item For $t=0,\ldots,N-1$
\begin{enumerate}
 \item Generate $\bw_1\sim N_k(\bmu_1,\bSigma_1)$, $\bw_2\sim N_k(\bmu_2,\bSigma_2)$, and $\bw_3\sim N_k(\mu_3,\bSigma_3)$. 
 \item For $i=1,\ldots,k$, set $p_i=\exp\left(w_{1i}\right)/\sum_{j=1}^3\exp\left(w_{ji}\right)$, 
 $q_i=\exp\left(w_{2i}\right)/\sum_{j=1}^3\exp\left(w_{ji}\right)$, and
 $1-p_i-q_i=\exp\left(w_{3i}\right)/\sum_{j=1}^3\exp\left(w_{ji}\right)$.
 \item Generate $\e \sim g(\cdot)$ and an index $i \sim \mathcal M(1;p_1,\ldots,p_{3^k-1})$ independently.
 \end{enumerate}
 \item \[ \bm x' = T_{\bz_i}(\supr{\bm x}{t}, \e) \quad \textrm{ and } 
 \quad \alpha(\supr{\bm x}{t}, \e) = \min\left(1, \dfrac{P(\bz^c_i\vert\bp,\bq)}{P(\bz_i\vert\bp,\bq)} 
 ~\dfrac{\pi(\bm x')}{\pi(\supr{\bm x}{t})} 
 ~\left|\frac{\partial (T_{\bz_i}(\supr{\bm x}{t}, \e),\e)}{\partial(\supr{\bm x}{t}, \e)}\right| \right)\]
\item Set \[ \supr{\bm x}{t+1}= \left\{\begin{array}{ccc}
 \bm x' & \textsf{ with probability } & \a(\supr{\bm x}{t},\e) \\
 \supr{\bm x}{t}& \textsf{ with probability } & 1 - \a(\supr{\bm x}{t},\e)
\end{array}\right.\]
\end{enumerate}
\item End for
\end{itemize}
\botline \rmfamily
\end{algo}

\section{Proof of detailed balance for TMCMC with dependent $\bz$}
\label{sec:detailed_balance_dependent_z}

Let $\by = T_{\bz}(\bx,\e) \in T_{\bz}(\bx,\Y)$, 
then $\bx = T_{\bz^c}(\by,\e)$. 
The kernel $K$ satisfies,
\begin{eqnarray*}
 \pi(\bx)K(\bx \to \by) & = & \pi(\bx)h_1(\bp)h_2(\bq)
 P(\bz\vert\bp,\bq)~g(\e)\min\left\{1,\dfrac{P(\bz^c\vert\bp,\bq) \pi(\by)}
 {P(\bz\vert\bp,\bq)\pi(\bx)}J_{\bz}(\bx,\e)\right\} \\
& = & h_1(\bp)h_2(\bq)g(\be)\min\left\{\pi(\bx)P(\bz\vert\bp,\bq),\pi(\by)P(\bz^c\vert\bp,\bq)
J_{\bz}(\bx,\e)\right\}
\end{eqnarray*}
and
\begin{eqnarray*}
 \pi(\by)K(\by \to \bx) & = & \pi(\by)h_1(\bp)h_2(\bq)P(\bz^c\vert\bp,\bq)g(\e)J_{\bz}(\bx,\e)
 \min\left\{1,\dfrac{P(\bz\vert\bp,\bq)
 \pi(\bx)}{P(\bz^c\vert\bp,\bq)\pi(\by)}J_{\bz^c}(\by,\e)~\right\} \\
& = & h_1(\bp)h_2(\bq)g(\e)\min\left\{\pi(\by)P(\bz^c\vert\bp,\bq)J_{\bz}(\bx,\e),\pi(\bx)P(\bz\vert\bp,\bq)
\right\}
\end{eqnarray*}

\section{Improved acceptance rates of additive TMCMC with singleton $\e$ compared 
to joint updating using RWMH}
\label{sec:accept_rate}

The joint RWMH algorithm generates $\be=(\e_1,\ldots,\e_k)'$ independently from $N(0,1)$,
and then uses the transformation of the form $x'_i=x_i+a_i\e_i$, where $a_i>0$ are
appropriate scaling constants. For large $k$, the so-called ``curse of dimensionality"
can force the acceptance rate to be close to zero. 
On the other hand, the additive-transformation based TMCMC also updates 
$(x_1,\ldots,x_k)$ simultaneously in a single block, but instead of using $k$ different $\e_i$,
it uses a single $\e$ for updating all the $x_i$ variables. In other words, for TMCMC
based on additive transformation $\be$ is of the form 
$\be=(\pm\e,\ldots,\pm\e)'$, where $\e\sim N(0,1)\mathbb I_{\{\e>0\}}$.
Thus, relative to RWMH, the dimension in the TMCMC case is effectively reduced to 1, 
avoiding the curse of dimensionality. Thus, it is expected that additive TMCMC will
have a much higher acceptance rate than RWMH.
In this section we formalize and compare the issues related to acceptance rates of additive TMCMC
and RWMH.

\subsection{Discussion on optimal scaling and optimal acceptance rate of additive TMCMC and RWMH}
\label{subsec:optimal_scaling}

A reasonable approach to compare the acceptance rates of additive TMCMC and RWMH is to 
develop the optimal scaling theory for additive TMCMC, obtain
the optimal acceptance rate, and then compare the latter with the optimal acceptance rates for RWMH,
which are already established in the MCMC literature. 
Indeed, optimal scaling and optimal acceptance rate of additive TMCMC and comparison with
those of RWMH is the subject of \ctn{Dey13b}, where it is shown that additive TMCMC has
a much higher optimal acceptance rate compared to RWMH. Before we summarize the results
of \ctn{Dey13b} we first provide a brief overview of optimal scaling and optimal acceptance rate of RWMH.

\subsubsection{Brief overview of optimal scaling and optimal acceptance rate for RWMH}  
\label{subsubsec:optimal_rwmh}

Roughly, the optimal random walk proposal variance, represented as an inverse function of the dimension $k$,
is the one that maximizes the speed of 
convergence to the stationary distribution of a relevant
diffusion process to which a `sped-up' version of RWMH weakly converges as the dimension $k$
increases to infinity. The optimal acceptance rate corresponds to the optimal proposal variance.
Under various assumptions on the form of the target distribution $\pi$, ranging from the
$iid$ assumption (\ctn{Roberts97}), through independent but non-identical set-up (\ctn{Bedard07}),
to a more general dependent structure (\ctn{Mattingly11}), the optimal acceptance rate
turns out to be 0.234. 

\subsubsection{Optimal scaling and optimal acceptance rate for additive TMCMC}
\label{subsubsec:optimal_tmcmc}

In \ctn{Dey13b} it has been proved
in the case of additive TMCMC, assuming $p_i=q_i=1/2$, 
that the optimal acceptance rate, as $k\rightarrow\infty$, is 0.439 
under the set-ups ($iid$, independent but non-identical, and dependent) for which the optimal 
acceptance rate for RWMH has been studied and established to be 0.234. 
Thus, the optimal acceptance rate for additive TMCMC is much higher than that of RWMH. 
The optimal scalings, that is, the optimal values of the scales $a_1,\ldots,a_k$ are also available
using the optimal scaling theory. 
As shown in \ctn{Dey13b}, all these results for additive TMCMC and RWMH 
remain true even in all the aforementioned set-ups if some of the co-ordinates of $\bx$ are
updated at random, conditioning on the remaining co-ordinates.   

\subsection{Comparison between the asymptotic forms of the acceptance rates of additive 
TMCMC and RWMH for strongly log-concave target densities}
\label{subsec:lower_upper_bound}
The results on optimal scaling and optimal acceptance rate discussed in Sections \ref{subsubsec:optimal_rwmh}
and \ref{subsubsec:optimal_tmcmc} are available only for special forms of the target distribution $\pi$. 
In this section we obtain the asymptotic forms of the acceptance rates associated with RWMH and additive TMCMC
assuming that the target density is strongly log-concave. 
In particular, under suitable conditions 
we show that as the dimension increases,
the acceptance rate of RWMH converges to zero at a much faster rate than that of additive TMCMC.

Assuming without loss of generality that the marginal variances of the target density $\pi$ are all unity 
(achieved after suitable scaling perhaps), for RWMH we consider the following proposed value $\bx'$ given
the current value $\bx$: $\bx'=\bx+\be$,
where $\be\sim N_k(\bzero,\bI_k)$. On the other hand, for additive TMCMC, we consider $\bx'=\bx+\e\bdelta$,
where $\epsilon \sim N(0,1)\mathbb{I}(\epsilon > 0)$ and the components $\delta_i$ 
of $\boldsymbol\delta$ are $iid$ taking values $\pm 1$ with probability $1/2$ each. 

To proceed we consider the following form of acceptance rate for our asymptotic framework.
Letting $R(\bx'|\bx)$ denote the acceptance probability of $\bx'$ given the current value $\bx$,
and letting $U\sim Uniform(0,1)$, the acceptance rate is given by
\begin{align}
AR &= \int R(\bx'|\bx)q(\bx'|\bx)\pi(\bx)d\bx d\bx'\notag
\\[1ex]
&=\int Pr\left(U<R(\bx'|\bx)\right)q(\bx'|\bx)\pi(\bx)d\bx d\bx'\notag
\\[1ex]
&=\int \left[\int Pr\left(U<R(\bx'|\bx)\right)q(\bx'|\bx)d\bx'\right]\pi(\bx)d\bx\notag
\\[1ex]
&=\int \left[\int_0^1 Pr\left(R(\bx'|\bx)>u\right)du\right]\pi(\bx)d\bx
\label{eq:ineq4}
\end{align}
In the above formula for acceptance rate note that $Pr\left(R(\bx'|\bx)>u\right)\rightarrow 1$ as $u\rightarrow 0$
and $Pr\left(R(\bx'|\bx)>u\right)\rightarrow 0$ as $u\rightarrow 1$.
Hence, given any $\eta_1>0,\eta_2>0$, we can choose $\psi_1,\psi_2\in (0,1)$ sufficiently small such that 
$\int_0^{\psi_1}Pr\left(R(\bx'|\bx)>u\right)du<\eta_1$ and $\int_{1-\psi_2}^1 Pr\left(R(\bx'|\bx)>u\right)du<\eta_2$.
Hence, re-writing (\ref{eq:ineq4})
as
\begin{eqnarray}
AR&=&\int \left[\int_0^{\psi_1} Pr\left(R(\bx'|\bx)>u\right)du\right]\pi(\bx)d\bx
+\int \left[\int_{\psi_1}^{1-\psi_2} Pr\left(R(\bx'|\bx)>u\right)du\right]\pi(\bx)d\bx\nonumber\\
&&\quad\quad\quad\quad+\int \left[\int_{1-\psi_2}^{1} Pr\left(R(\bx'|\bx)>u\right)du\right]\pi(\bx)d\bx,\nonumber
\end{eqnarray}
we find that the first and the third term on the right hand side are negligible for any algorithm. So,
for the purpose of comparing algorithms with respect to their acceptance rates, we consider only the middle term; 
in all that follow we denote 
\begin{equation}
AR = \int \left[\int_{\psi_1}^{1-\psi_2} Pr\left(R(\bx'|\bx)>u\right)du\right]\pi(\bx)d\bx.
\label{eq:ineq5}
\end{equation}

For our purpose, we consider a target density $\pi(\mathbf{x})$ of $k$ variables that is strongly log-concave, 
that is,
\begin{equation}\label{eqn:stronglyconvex}
 -M_k \mathbf{I}_k \leq \nabla^2 \log \pi(\mathbf{x}) \leq -m_k \mathbf{I}_k,
\end{equation}
where we assume that $M_k=c_k+m_k$, with $m_k,~c_k>0$ for every $k$. We further assume that 
$m_k\rightarrow\infty$, and the sequence $\{c_k\}$ is such that $c_k/m_k \rightarrow 0$ as $k\rightarrow\infty$. 
Then clearly, $M_k\asymp m_k$, meaning $M_k/m_k\rightarrow 1$ as $k\rightarrow\infty$. In fact, we assume that
$M_k/m_k$ approaches 1 at a sufficiently fast rate, so that
$k\left\vert\frac{M_k}{m_k}-1\right\vert\rightarrow 0$. For our purpose we assume that 
$c_k=O(k^s); s\geq 1$ and 
$m_k=O(k^t);t>s+1\geq 2$,
so that $M_k=O(k^t)$. It is easy to verify that these choices satisfy the above conditions.

It is important to note that our assumption $m_k,M_k\rightarrow\infty$ need not hold for all
strongly log-concave distributions. For instance,
when $\pi$ is the $iid$ product of standard normals, that is, when 
$\bx\sim N_k\left(\bzero,\bI_k\right)$ under $\pi$, $\nabla^2 \log \pi(\mathbf{x})=\bI_k$.
In this case, $m_k=M_k=1$ for every $k\geq 1$. In general, even if $m_k$ and $M_k$ remains finite
as $k$ grows to infinity, our proofs remain valid provided that $M_k\asymp m_k$ and 
$k\left\vert\frac{M_k}{m_k}-1\right\vert\rightarrow 0$. The case of $\pi$ being an $iid$ product
of standard normals clearly satisfies the above conditions.

\subsubsection{Asymptotic form of the acceptance rate for RWMH}
\label{subsubsec:asymp_rwmh}

Let $\bx^*$ denote the mode of the target density $\pi(\cdot)$.
Then for every $r \in (0,1),$
\begin{eqnarray*}
& & Pr~(R(\mathbf{x}'| \mathbf{x}) < r)  = Pr~(\pi(\mathbf{x}')/\pi(\mathbf{x}) < r) = Pr~(\log \pi(\mathbf{x}') - \log \pi(\mathbf{x})  < \log r )\\
& = &  Pr~(\left[\log \pi(\mathbf{x}') - \log \pi(\mathbf{x^*})\right]-\left[\log \pi(\mathbf{x}') - \log \pi(\mathbf{x^*})\right]  < \log r )\\
& = & Pr~\left(\left[\nabla \log \pi(\mathbf{x^*})^T (\mathbf{x}' - \mathbf{x^*}) + (1/2)(\mathbf{x}'-\mathbf{x^*})^T \nabla^2 \log \pi(\boldsymbol{\xi_1(\mathbf x',\mathbf x^*)})(\mathbf{x}'-\mathbf{x^*})\right]\right.\\
&&\quad\quad\left.-\left[\nabla \log \pi(\mathbf{x^*})^T (\mathbf{x} - \mathbf{x^*}) + (1/2)(\mathbf{x}-\mathbf{x^*})^T \nabla^2 \log \pi(\boldsymbol{\xi_2(\mathbf x,\mathbf x^*)})(\mathbf{x}-\mathbf{x^*})\right]
< \log r\right), \\
&&\quad\textrm{~for some } \boldsymbol\xi_1(\mathbf x',\mathbf x^*), \boldsymbol\xi_2(\mathbf x,\mathbf x^*)\ \ \mbox{depending upon}\ \ (\mathbf x',\mathbf x^*) \ \ \mbox{and} (\mathbf x,\mathbf x^*) \ \ \mbox{respectively};\\
& = & Pr~\left(\left[(1/2)(\mathbf{x}'-\mathbf{x^*})^T \nabla^2 \log \pi(\boldsymbol{\xi_1(\mathbf x',\mathbf x^*)})(\mathbf{x}'-\mathbf{x^*})\right]\right.\\
&&\quad\quad\left. -\left[(1/2)(\mathbf{x}-\mathbf{x^*})^T \nabla^2 \log \pi(\boldsymbol{\xi_2(\mathbf x',\mathbf x^*)})(\mathbf{x}-\mathbf{x^*})\right]<\log r\right)\\
&&\quad\quad\quad\quad\quad\mbox{since}\ \ \nabla \log \pi(\mathbf{x^*})=\bzero.
\end{eqnarray*}
Thus from the assumption in \eqref{eqn:stronglyconvex}, and noting that
$(\mathbf{x}'-\mathbf{x^*})^T(\mathbf{x}'-\mathbf{x^*})=(\mathbf{x}-\mathbf{x^*})^T(\mathbf{x}-\mathbf{x^*})
+2(\mathbf{x}-\mathbf{x^*})^T\boldsymbol\epsilon+\boldsymbol\epsilon^T\boldsymbol\epsilon$
it follows that 
\begin{equation}\label{eqn:bound1}
 \begin{split}
  & Pr~\left(\frac{(M_k-m_k)}{2}(\mathbf{x}-\mathbf{x^*})^T (\mathbf{x}-\mathbf{x^*}) 
  -m_k(\mathbf{x}-\mathbf{x^*})^T\boldsymbol\epsilon - \frac{m_k}{2}\boldsymbol\epsilon^T\boldsymbol\epsilon
  < \log r\right) \\
 & \leq Pr~(R(\mathbf{x}'| \mathbf{x}) < r) \\
 & \leq 
   Pr~\left(-\frac{(M_k-m_k)}{2}(\mathbf{x}-\mathbf{x^*})^T (\mathbf{x}-\mathbf{x^*}) 
  -M_k(\mathbf{x}-\mathbf{x^*})^T\boldsymbol\epsilon - \frac{M_k}{2}\boldsymbol\epsilon^T\boldsymbol\epsilon
  < \log r\right);
 \end{split}
\end{equation}
so that
\begin{equation}\label{eqn:new_bound2}
 \begin{split}
  & Pr~\left(\frac{(M_k-m_k)}{2}(\mathbf{x}-\mathbf{x^*})^T (\mathbf{x}-\mathbf{x^*}) 
  -m_k(\mathbf{x}-\mathbf{x^*})^T\boldsymbol\epsilon - \frac{m_k}{2}\boldsymbol\epsilon^T\boldsymbol\epsilon
  > \log r\right) \\
 & \geq Pr~(R(\mathbf{x}'| \mathbf{x}) > r) \\
 & \geq 
   Pr~\left(-\frac{(M_k-m_k)}{2}(\mathbf{x}-\mathbf{x^*})^T (\mathbf{x}-\mathbf{x^*}) 
  -M_k(\mathbf{x}-\mathbf{x^*})^T\boldsymbol\epsilon - \frac{M_k}{2}\boldsymbol\epsilon^T\boldsymbol\epsilon
  > \log r\right),
 \end{split}
\end{equation}
Hence, using (\ref{eq:ineq5}) it can be seen that the acceptance rate is bounded above and below as follows
\begin{align}
& \int \left[\int_{\psi_1}^{1-\psi_2} \left\{ \int_{A^k_{2,\be,u}}\frac{1}{(2\pi)^{k/2}}\exp\left\{-\frac{1}{2}\be^T\be\right\}d\be \right\}du\right]\pi(\bx)d\bx\notag
\\[1ex]
&\geq AR^{(RWMH)}\label{eq:ar_rwmh} 
\\[1ex]
&\geq \int \left[\int_{\psi_1}^{1-\psi_2} \left\{ \int_{A^k_{1,\be,u}}\frac{1}{(2\pi)^{k/2}}\exp\left\{-\frac{1}{2}\be^T\be\right\}d\be \right\}du\right]\pi(\bx)d\bx,
\notag
\end{align}
where
$$A^k_{1,\be,u}=\left\{\mathbf{x}: -\frac{(M_k-m_k)}{2}(\mathbf{x}-\mathbf{x^*})^T (\mathbf{x}-\mathbf{x^*})
-M_k(\mathbf{x}-\mathbf{x^*})^T\boldsymbol\epsilon - \frac{M_k}{2}\boldsymbol\epsilon^T\boldsymbol\epsilon
> \log u\right\}$$ and
$$A^k_{2,\be,u}=\left\{\mathbf{x}: \frac{(M_k-m_k)}{2}(\mathbf{x}-\mathbf{x^*})^T (\mathbf{x}-\mathbf{x^*})
-m_k(\mathbf{x}-\mathbf{x^*})^T\boldsymbol\epsilon - \frac{m_k}{2}\boldsymbol\epsilon^T\boldsymbol\epsilon
> \log u\right\}.$$
Now, note that for some $\bxi(\bx,\bx^*)$ depending upon $\bx$ and $\bx^*$,
\begin{eqnarray}
\pi(\bx)
 &=&\exp\left\{\log\pi(\mathbf{x})\right\}d\mathbf{x}\nonumber\\
 &=& \exp\left\{\log\pi(\mathbf{x^*})
 +\frac{1}{2}(\mathbf{x}-\mathbf{x^*})^T\nabla^2\log\pi(\boldsymbol\xi(\mathbf x,\mathbf x^*))
 (\mathbf{x}-\mathbf{x^*})\right\}, \nonumber\\
\end{eqnarray}
so that the inequalities related to strong convexity, given by (\ref{eqn:stronglyconvex}) yield
\begin{equation*}
\frac{(2\pi)^{k/2}}{m^k_k}\pi(\mathbf x^*)
\frac{m^k_k}{(2\pi)^{k/2}} \exp\left\{-\frac{m_k}{2}(\mathbf{x}-\mathbf{x^*})^T(\mathbf{x}-\mathbf{x^*})\right\}
\geq \pi(\bx)
\end{equation*}
\begin{equation}
\geq \frac{(2\pi)^{k/2}}{M^k_k}\pi(\mathbf x^*)
\frac{M^k_k}{(2\pi)^{k/2}} \exp\left\{-\frac{M_k}{2}(\mathbf{x}-\mathbf{x^*})^T(\mathbf{x}-\mathbf{x^*})\right\}
\label{eq:eq2}
\end{equation}

Using the lower bound of $\pi(\bx)$ given by (\ref{eq:eq2}) and Fubini's theorem, the lower bound
of the acceptance rate given by (\ref{eq:ar_rwmh}) can be further bounded below as
\begin{align}
AR^{(RWMH)} &\geq \int \int_{\psi_1}^{1-\psi_2}  \int_{A^k_{1,\be,u}}\frac{1}{(2\pi)^{k/2}}\exp\left\{-\frac{1}{2}\be^T\be\right\}\pi(\bx)~d\bx~du~d\be \notag
\\[1ex]
&\geq \frac{(2\pi)^{k/2}}{M^k_k}\pi(\mathbf x^*)\int \int_{\psi_1}^{1-\psi_2}  \int_{A^k_{1,\be,u}}\frac{1}{(2\pi)^{k/2}}\exp\left\{-\frac{1}{2}\be^T\be\right\}\notag
\\[1ex]
&\quad\quad\quad\quad\times\frac{M^k_k}{(2\pi)^{k/2}} \exp\left\{-\frac{M_k}{2}(\mathbf{x}-\mathbf{x^*})^T(\mathbf{x}-\mathbf{x^*})\right\}~d\bx~du~d\be\notag
\\[1ex]
&\geq \frac{(2\pi)^{k/2}}{M^k_k}\pi(\mathbf x^*)\inf_{u\in (\psi_1,1-\psi_2)}Pr~(A^k_{1,\be,u}),
\label{eq:lowerbound1}
\end{align}
where $Pr~(A^k_{1,\be,u})$ must be calculated with respect to $\be\sim N_k(\bzero,\bI_k)$, and
independently, $\bx-\bx^*\sim N_k(\bzero,M^{-1}_k\bI_k)$. 

Similarly, using the upper bound of $\pi(\bx)$ given by (\ref{eq:eq2}) the upper bound
of the acceptance rate given by (\ref{eq:ar_rwmh}) can be further bounded above as
\begin{align}
AR^{(RWMH)} &\leq \int \int_{\psi_1}^{1-\psi_2}  \int_{A^k_{2,\be,u}}\frac{1}{(2\pi)^{k/2}}\exp\left\{-\frac{1}{2}\be^T\be\right\}\pi(\bx)~d\bx~du~d\be \notag
\\[1ex]
&= \frac{(2\pi)^{k/2}}{m^k_k}\pi(\mathbf x^*)\int_{\psi_1}^{1-\psi_2} Pr~(A^k_{2,\be,u})du\notag
\\[1ex]
&\leq \frac{(2\pi)^{k/2}}{m^k_k}\pi(\mathbf x^*)\sup_{u\in (\psi_1,1-\psi_2)}Pr~(A^k_{2,\be,u})
\label{eq:upperbound1}
\end{align}
The probability 
$Pr~(A^k_{2,\be,u_k})$ must be calculated with respect to $\be\sim N_k(\bzero,\bI_k)$, and
independently, $\bx-\bx^*\sim N_k(\bzero,m^{-1}_k\bI_k)$. 
Thus, we have
\begin{equation}
\frac{(2\pi)^{k/2}}{M^k_k}\pi(\mathbf x^*)\inf_{u\in (\psi_1,1-\psi_2)}Pr~(A^k_{1,\be,u})
\leq AR^{(RWMH)}
\leq \frac{(2\pi)^{k/2}}{m^k_k}\pi(\mathbf x^*)\sup_{u\in (\psi_1,1-\psi_2)}Pr~(A^k_{2,\be,u}).
\label{eq:lower_upper1}
\end{equation}

We first focus on the lower bound in (\ref{eq:lower_upper1}).
As $k\rightarrow\infty$,
\\[2mm]
$-\frac{(M_k-m_k)}{2}(\mathbf{x}-\mathbf{x^*})^T (\mathbf{x}-\mathbf{x^*})
-M_k(\mathbf{x}-\mathbf{x^*})^T\boldsymbol\epsilon - \frac{M_k}{2}\boldsymbol\epsilon^T\boldsymbol\epsilon$
\begin{eqnarray}
&\sim &
AN\left(-\frac{k}{2}\left[\left(\frac{M_k-m_k}{M_k}\right)+M_k\right],
\frac{k}{2}\left[\left(\frac{M_k-m_k}{M_k}\right)^2+2M_k+M^2_k\right]\right),
\label{eq:an1}
\end{eqnarray}
where $AN(\mu,\sigma^2)$ denotes 
asymptotic normal with mean $\mu$ and variance $\sigma^2$. From (\ref{eq:an1}) it follows that
\begin{eqnarray}
\inf_{u\in (\psi_1,1-\psi_2)}Pr~(A^k_{1,\be,u}) &\asymp &  
1-\sup_{u\in (\psi_1,1-\psi_2)}
\Phi\left(\frac{\log u+\frac{k}{2}\left[\left(\frac{M_k-m_k}{M_k}\right)+M_k\right]}
{\sqrt{\frac{k}{2}\left[\left(\frac{M_k-m_k}{M_k}\right)^2+2M_k+M^2_k\right]}}\right)\nonumber\\
&=& 1- \Phi\left(\frac{\log (1-\psi_2)+\frac{k}{2}\left[\left(\frac{M_k-m_k}{M_k}\right)+M_k\right]}
{\sqrt{\frac{k}{2}\left[\left(\frac{M_k-m_k}{M_k}\right)^2+2M_k+M^2_k\right]}}\right).
\label{eq:infimum1}
\end{eqnarray}
Combining (\ref{eq:lowerbound1}) and (\ref{eq:infimum1}) we obtain
\begin{equation}
AR^{(RWMH)}\geq
\frac{(2\pi)^{k/2}}{M^k_k}\pi(\mathbf x^*)
\left\{1-\Phi\left(\frac{\log (1-\psi_2)+\frac{k}{2}\left[\left(\frac{M_k-m_k}{M_k}\right)+M_k\right]}
{\sqrt{\frac{k}{2}\left[\left(\frac{M_k-m_k}{M_k}\right)^2+2M_k+M^2_k\right]}}\right)
\right\}.
\label{eq:AR_RWMH_lowerbound1}
\end{equation}

Now focusing our attention on the upper bound of $AR^{(RWMH)}$ we similarly obtain
\begin{eqnarray}
AR^{(RWMH)}
&\leq &\frac{(2\pi)^{k/2}}{m^k_k}\pi(\mathbf x^*)
\left\{1-
\Phi\left(\frac{\log \psi_1-\frac{k}{2}\left[\left(\frac{M_k-m_k}{m_k}\right)-m_k\right]}
{\sqrt{\frac{k}{2}\left[\left(\frac{M_k-m_k}{m_k}\right)^2+2m_k+m^2_k\right]}}\right)\right\}.\nonumber\\
\label{eq:AR_RWMH_upperbound1}
\end{eqnarray}
In other words,
\begin{align}\label{eq:AR_RWMH_bounds}
\frac{(2\pi)^{k/2}}{M^k_k}\pi(\mathbf x^*)
&\left\{1-\Phi\left(\frac{\log (1-\psi_2)+\frac{k}{2}\left[\left(\frac{M_k-m_k}{M_k}\right)+M_k\right]}
{\sqrt{\frac{k}{2}\left[\left(\frac{M_k-m_k}{M_k}\right)^2+2M_k+M^2_k\right]}}\right)
\right\}
\leq  AR^{(RWMH)}\notag\\
&\leq \frac{(2\pi)^{k/2}}{m^k_k}\pi(\mathbf x^*)
\left\{1-\Phi\left(\frac{\log \psi_1-\frac{k}{2}\left[\left(\frac{M_k-m_k}{m_k}\right)-m_k\right]}
{\sqrt{\frac{k}{2}\left[\left(\frac{M_k-m_k}{m_k}\right)^2+2m_k+m^2_k\right]}}\right)\right\}.\nonumber\\
\end{align}
Since $m_k\asymp M_k$, it is easy to see that 
\begin{align}\label{eq:AR_RWMH_bounds2}
&\frac{\log (1-\psi_2)+\frac{k}{2}\left[\left(\frac{M_k-m_k}{M_k}\right)+M_k\right]}
{\sqrt{\frac{k}{2}\left[\left(\frac{M_k-m_k}{M_k}\right)^2+2M_k+M^2_k\right]}}
\asymp  \sqrt{\frac{k}{2}},\quad\mbox{and}\notag\\
&\frac{\log \psi_1-\frac{k}{2}\left[\left(\frac{M_k-m_k}{m_k}\right)-m_k\right]}
{\sqrt{\frac{k}{2}\left[\left(\frac{M_k-m_k}{m_k}\right)^2+2m_k+m^2_k\right]}}
\asymp  \sqrt{\frac{k}{2}}.\notag
\end{align}
Hence, it follows that
\begin{equation}
AR^{(RWMH)}\asymp
\frac{(2\pi)^{k/2}}{M^k_k}\left\{1-\Phi\left(\sqrt{\frac{k}{2}}\right)\right\}.
\label{eq:RWMH_asymp}
\end{equation}

\subsubsection{Asymptotic bounds of the acceptance rate for additive TMCMC}
\label{subsubsec:asymp_tmcmc}
Next let us obtain lower and upper bounds of $AR^{(TMCMC)}$ associated with TMCMC with additive transformation. 
Recall that in this case, $\mathbf{x}' = \mathbf{x} + \epsilon \boldsymbol\delta$ 
where $\epsilon \sim N(0,1)\mathbb{I}(\epsilon > 0)$ and the components $\delta_i$ 
of $\boldsymbol\delta$ are $iid$ taking values $\pm1$ with probability $1/2$ each. 
In this set up (\ref{eqn:new_bound2}) becomes 
\begin{equation}\label{eqn:new_bound3}
 \begin{split}
  & Pr~\left(\frac{(M_k-m_k)}{2}(\mathbf{x}-\mathbf{x^*})^T (\mathbf{x}-\mathbf{x^*}) 
  -m_k\e(\mathbf{x}-\mathbf{x^*})^T\bdelta - \frac{m_k}{2}k\e^2
  > \log r\right) \\
 & \geq Pr~(R(\mathbf{x}'| \mathbf{x}) > r) \\
 & \geq 
   Pr~\left(-\frac{(M_k-m_k)}{2}(\mathbf{x}-\mathbf{x^*})^T (\mathbf{x}-\mathbf{x^*}) 
  -M_k\e(\mathbf{x}-\mathbf{x^*})^T\bdelta - \frac{M_k}{2}k\e^2
  > \log r\right),
 \end{split}
\end{equation}
Now notice that, under the lower bound of $\pi(\bx)$ provided in (\ref{eq:eq2}), as $k\rightarrow\infty$, 
$$\frac{M_k(\mathbf{x}-\mathbf{x^*})^T (\mathbf{x}-\mathbf{x^*})}{k}
\stackrel{\a. s.}{\longrightarrow} 1,$$ and
$$\frac{\sqrt{M_k}(\mathbf{x}-\mathbf{x^*})^T\bdelta}{k} \stackrel{\a. s.}{\longrightarrow}0.$$
Similarly, under the upper bound of $\pi(\bx)$ in (\ref{eq:eq2}), the above hold with $M_k$ replaced with $m_k$.
From these it follow that the asymptotic forms of the 
lower and the upper bounds of (\ref{eqn:new_bound3}) are given by
\[
Pr~\left(-\frac{(M_k-m_k)}{2}(\mathbf{x}-\mathbf{x^*})^T (\mathbf{x}-\mathbf{x^*}) 
  -M_k\e(\mathbf{x}-\mathbf{x^*})^T\bdelta - \frac{M_k}{2}k\e^2
  > \log r\right)
\]
\[ \asymp
\frac{(2\pi)^{k/2}}{M^k_k}\pi(\bx^*) 
\left\{2\Phi\left(\sqrt{-\frac{2}{kM_k}\log r-\left(\frac{M_k-m_k}{M^2_k}\right)}\right)-1\right\}
\]
and
\[
Pr~\left(\frac{(M_k-m_k)}{2}(\mathbf{x}-\mathbf{x^*})^T (\mathbf{x}-\mathbf{x^*}) 
  -m_k\e(\mathbf{x}-\mathbf{x^*})^T\bdelta - \frac{m_k}{2}k\e^2
  > \log r\right)
\]
\[ \asymp
\frac{(2\pi)^{k/2}}{m^k_k}\pi(\bx^*)
\left\{2\Phi\left(\sqrt{-\frac{2}{km_k}\log r+\left(\frac{M_k-m_k}{m^2_k}\right)}\right)-1\right\}. 
\]
Using the above results, it follows as in the case of $AR^{(RWMH)}$ that
\begin{align} 
&\frac{(2\pi)^{k/2}}{M^k_k}\pi(\bx^*)
\left\{2\inf_{u\in (\psi_1,1-\psi_2)}\Phi\left(\sqrt{-\frac{2}{kM_k}\log u-\left(\frac{M_k-m_k}{M^2_k}\right)}\right)-1\right\}\notag\\
&\leq AR^{(TMCMC)}
\leq \frac{(2\pi)^{k/2}}{M^k_k}\pi(\bx^*) 
\left\{2\sup_{u\in (\psi_1,1-\psi_2)}\Phi\left(\sqrt{-\frac{2}{km_k}\log u-\left(\frac{m_k-M_k}{m^2_k}\right)}\right)-1\right\}.\notag
\end{align}
Substituting the infimum and supremum over $(\psi_1,1-\psi_2)$ we obtain 
\begin{align} 
&\frac{(2\pi)^{k/2}}{M^k_k}\pi(\bx^*)
\left\{2\Phi\left(\sqrt{-\frac{2}{kM_k}\log (1-\psi_2)-\left(\frac{M_k-m_k}{M^2_k}\right)}\right)-1\right\}\notag\\
&\leq AR^{(TMCMC)}
\leq \frac{(2\pi)^{k/2}}{m^k_k}\pi(\bx^*)  
\left\{2\Phi\left(\sqrt{-\frac{2}{km_k}\log \psi_1-\left(\frac{m_k-M_k}{m^2_k}\right)}\right)-1\right\}.\notag
\end{align}
Since $k\left\vert\frac{M_k}{m_k}-1\right\vert\rightarrow 0$ and $m_k\asymp M_k$, it follows that
\begin{align} 
& -\frac{2}{kM_k}\log (1-\psi_2)-\left(\frac{M_k-m_k}{M^2_k}\right)
\asymp 
-\frac{2}{kM_k}\log (1-\psi_2)\quad\mbox{and}\notag\\
& -\frac{2}{km_k}\log \psi_1-\left(\frac{m_k-M_k}{m^2_k}\right)
\asymp 
-\frac{2}{km_k}\log \psi_1\asymp -\frac{2}{kM_k}\log \psi_1.\notag
\end{align}
Hence,  
\begin{eqnarray}
\frac{(2\pi)^{k/2}}{M^k_k}\pi(\bx^*)
\left\{2\Phi\left(\sqrt{-\frac{2}{kM_k}\log (1-\psi_2)}\right)-1\right\}
&\leq & AR^{(TMCMC)}\nonumber\\
&\leq & \frac{(2\pi)^{k/2}}{M^k_k}\pi(\bx^*)  
\left\{2\Phi\left(\sqrt{-\frac{2}{kM_k}\log \psi_1}\right)-1\right\}.\nonumber\\
\label{eq:AR_TMCMC_bounds}
\end{eqnarray}

For comparing (\ref{eq:AR_TMCMC_bounds}) with (\ref{eq:RWMH_asymp}) 
where $M_k=O\left(k^t\right);t>2$, it can be easily verified using L'Hospital's rule that
for any $\zeta_1>0$, $\zeta_2>0$,
\begin{equation}
\frac{2\Phi\left(\frac{\zeta_1}{\sqrt{kM_k}}\right)-1}{1-\Phi\left(\zeta_2\sqrt{k}\right)}
\rightarrow \infty.
\label{eq:comparison1}
\end{equation}
The above result will continue to hold if instead of $M_k=O\left(k^t\right);t>2$, 
$M_k\rightarrow a$, where $a>0$ is some constant.
Hence, $AR^{(TMCMC)}$
converges to zero at a much slower rate compared to $AR^{(RWMH)}$.

\section{Comparison of TMCMC with HMC}
\label{sec:hmc}

Motivated by Hamiltonian dynamics, \ctn{Duane87} introduced HMC, an MCMC algorithm with 
 deterministic proposals based on approximations of the Hamiltonian equations. 
 We will show that this algorithm is a special case of TMCMC, but first we provide a brief
 overview of HMC. More details can be found in \ctn{Liu01}, \ctn{Cheung09} and the references therein.

\subsection{Overview of HMC}
\label{subsec:hmc_overview}

If $\pi(\bx)$ is the target distribution, a fictitious dynamical system may be considered, where
$\bx(t)\in\mathbb R^d$ can be thought of as the $d$-dimensional position vector of a body of particles
at time $t$. If $\bv(t)=\dot{\bx}(t)=\frac{d{\bx}}{dt}$ is the speed vector of the
particles, $\dot{\bv}(t)=\frac{d{\bv}}{dt}$ is its acceleration vector, and 
$\Vec{F}$ is the force exerted on the particle; then, by Newton's law of motion 
$\Vec{F}=\boldm\dot{\bv}(t)=(m_1\dot{v_1},\ldots,m_d\dot{v_d})(t)$, where $\boldm\in\mathbb R^d$
is a mass vector. The momentum vector, $\bp=\boldm\bv$, often used in classical mechanics,
can be thought of as a vector of auxiliary variables brought in to facilitate simulation from $\pi(\bx)$.
The kinetic energy of the system is defined as $W(\bp)=\bp'\bM^{-1}\bp$, $\bM$ being the mass matrix.
Usually, $\bM$ is taken as $\bM=diag\{m_1,\ldots,m_d\}$.

The target density $\pi(\bx)$ is linked to the dynamical system via the potential energy field
of the system, defined as $U(\bx)=-\log\pi(\bx)$. The total energy (Hamiltonian function), is given by
$H(\bx,\bp)=U(\bx)+W(\bp)$.
A joint distribution over the phase-space $(\bx,\bp)$ is then considered, given by
\begin{equation}
f(\bx,\bp)\propto\exp\left\{-H(\bx,\bp)\right\}=\pi(\bx)\exp\left(-\bp'\bM^{-1}\bp/2\right)
\label{eq:phase_space_dist}
\end{equation}
Since the marginal density of $f(\bx,\bp)$ is $\pi(\bx)$, it now remains to provide a joint proposal
mechanism for simulating $(\bx,\bp)$ jointly; ignoring $\bp$ yields $\bx$ marginally from $\pi(\cdot)$. 

For the joint proposal mechanism, HMC makes use of Newton's law of motion, derived from
the law of conservation of energy, and often written in the form of Hamiltonian equations, given by 
\begin{eqnarray}
\dot{\bx}(t)&=&\frac{\partial H(\bx,\bp)}{\partial\bp}=\bM^{-1}\bp,\nonumber\\
\dot{\bp}(t)&=&-\frac{\partial H(\bx,\bp)}{\partial\bx}=-\nabla U(\bx),\nonumber
\end{eqnarray}
where $\nabla U(\bx)=\frac{\partial U(\bx)}{\partial\bx}$.
The Hamiltonian equations can be approximated by the commonly used leap-frog algorithm (\ctn{Hockney70}), 
given by,
\begin{align}
\bx(t+\d t)&=\bx(t)+\d t\bM^{-1}\left\{\bp(t)-\frac{\d t}{2}\nabla U\left(\bx(t)\right)\right\}\label{eq:frog1}
\\[1ex]
\bp(t+\d t)&=\bp(t)-\frac{\d t}{2}\left\{\nabla U\left(\bx(t)\right)+\nabla U\left(\bx(t+\d t)\right)\right\}\label{eq:frog2}
\end{align}
Given choices of $\bM$, $\d t$, and $L$, the HMC is then given by the following algorthm:
\begin{algo}\label{algo:hmc} \topline HMC \botline \normalfont \ttfamily
\begin{itemize}
 \item Initialise $\bx$ and draw $\bp\sim N(\bzero, \bM)$.
 \item Assuming the current state to be $(\bx,\bp)$, do the following:
\begin{enumerate}
 \item Generate $\bp'\sim N\left(\bzero,\bM\right)$;
 \item Letting $(\bx(0),\bp(0))=(\bx,\bp')$, run the leap-frog algorithm for 
 $L$ time steps, to yield $(\bx'',\bp'')=\left(\bx(t+L\d t),\bp(t+L\d t)\right)$;
\item Accept $(\bx'',\bp'')$ with probability 
\begin{equation}
\min\left\{1,\exp\left\{-H(\bx'',\bp'')+H(\bx,\bp')\right\}\right\},
\label{eq:hmc_accept}
\end{equation}
and accept $(\bx,\bp')$ with the remaining probability.
\end{enumerate}
\end{itemize}
\botline \rmfamily
\end{algo}
In the above algorithm, it is not required to store simulations of $\bp$.
Next we show that HMC is a special case of TMCMC.

\subsection{HMC is a special case of TMCMC}
\label{subsec:hmcmc_special_tmcmc}

To see that HMC is a special case of TMCMC, note that the leap-frog step of the HMC algorithm (Algorithm \ref{algo:hmc})
is actually a deterministic transformation of the form $g^L: (\bx(0),\bp(0))\to(\bx(L),\bp(L))$ (see \ctn{Liu01}). 
This transformation satisfies the following: if $(\bx',\bp')=g^L(\bx,\bp)$, then $(\bx,-\bp)=g^L(\bx',-\bp')$.

The Jacobian
of this transformation is 1 because of the volume preservation property, which says that if
$V(0)$ is a subset of the phase space, and if $V(t)=\left\{(\bx(t),\bp(t)): (\bx(0),\bp(0))\in V(0)\right\}$, then the volume
$|V(t)|={\int\int}_{V(t)}d\bx d\bp={\int\int}_{V(0)}d\bx d\bp=|V(0)|$. As a result, the Jacobian does not feature in the
HMC acceptance probability (\ref{eq:hmc_accept}).

For any dimension, there is only one move type defined for HMC,
which is the forward transformation $g^L$. Hence, this move type has probability one of selection, and all
other move types which we defined in general terms in connection with TMCMC, have zero probability of selection. 
As a result, 
the corresponding TMCMC acceptance ratio needs 
slight modification---it must be made free of the move-type probabilities, 
which is exactly the case in (\ref{eq:hmc_accept}).

The momentum vector
$\bp$ can be likened to $\be$ of TMCMC, but note that $\bp$ must always be of the same dimensionality as $\bx$;
this is of course, permitted by TMCMC as a special case. 

\subsection{Comparison of acceptance rate for $L=1$ with RWMH and TMCMC}
\label{subsec:hmc_vs_tmcmc}

For $L=1$, the proposal corresponding to HMC is given by (see \ctn{Cheung09})
\begin{equation}
q(\bx'\mid\bx(t))=N\left(\bx':\bmu(t),\bSigma(t)\right),
\label{eq:hmc_proposal}
\end{equation}
where (\ref{eq:hmc_proposal}) is a normal distribution with mean and variance
given, respectively, by the following:
\begin{align}
\bmu(t)&=\bx(t)+\frac{1}{2}\bM^{-1}\d t\nabla\log\left(\pi(\bx(t))\right)\label{eq:hmc_mean}
\\[1ex]
\bSigma(t)&=\d t\bM^{-1}\label{eq:hmc_var}
\end{align}
Assuming diagonal $\bM$ with $m_i$ being the $i$-th diagonal element,
the proposal can be re-written in the following more convenient manner:
for $i=1,\ldots,k$,
\begin{equation}
x'_i=x_i(t)+\e_i,
\label{eq:hmc_proposal2}
\end{equation}
where $s_i(t)$ denotes the $i$-th component of $\nabla\log\left(\pi(\bx(t))\right)$,
and $\e_i\sim N\left(\frac{1}{2}\frac{\d t s_i(t)}{m_i},\frac{\d t}{m_i}\right)$.
Assuming, as is usual, that $m_i=1$ for each $i$, it follows that
\begin{equation}
\frac{\parallel\bx'-\bx\parallel^2}{\d t^2} =\sum_{i=1}^k\left(\frac{\e_i}{\d t}\right)^2=\sum_{i=1}^k\e'^2_i
\sim\chi^2_k(\lambda),
\label{eq:hmc_rate1}
\end{equation}
where $\chi^2_k(\lambda)$ is a non-central $\chi^2$ distribution with $k$ degrees of freedom
and non-centrality parameter 
$\lambda =\frac{\d t^2}{4}\sum_{i=1}^ks^2_i(t)$.
Since, as either $k\rightarrow\infty$ or $\lambda\rightarrow\infty$,
\begin{equation}
\frac{\chi^2_k(\lambda)-(k+\lambda)}{\sqrt{2(k+2\lambda)}}\stackrel{\mathcal L}{\rightarrow} N(0,1),
\label{eq:clt}
\end{equation}
assuming the same strong log-concavity conditions on the target density $\pi$ 
as provided in Section \ref{subsec:lower_upper_bound} it follows as in (\ref{eq:AR_RWMH_bounds}) that,
\begin{align}\label{eq:AR_HMC_bounds}
\frac{(2\pi)^{k/2}}{M^k_k}\pi(\mathbf x^*)
&\left\{1-\Phi\left(\frac{\log (1-\psi_2)+\frac{k}{2}\left[\left(\frac{M_k-m_k}{M_k}\right)+M_k\d t^2\left(1+\frac{\lambda}{k}\right)\right]}
{\sqrt{\frac{k}{2}\left[\left(\frac{M_k-m_k}{M_k}\right)^2+2M_k\d t^2\left(1+\frac{\lambda}{k}\right)
+M^2_k\d t^4\left(1+\frac{2\lambda}{k}\right)\right]}}\right)
\right\}
\leq  AR^{(HMC)}\notag\\
&\leq \frac{(2\pi)^{k/2}}{m^k_k}\pi(\mathbf x^*)
\left\{1-\Phi\left(\frac{\log \psi_1-\frac{k}{2}\left[\left(\frac{M_k-m_k}{m_k}\right)-m_k\d t^2\left(1+\frac{\lambda}{k}\right)\right]}
{\sqrt{\frac{k}{2}\left[\left(\frac{M_k-m_k}{m_k}\right)^2+2m_k\d t^2\left(1+\frac{\lambda}{k}\right)
+m^2_k\d t^4\left(1+\frac{2\lambda}{k}\right)\right]}}\right)\right\}\nonumber\\
\end{align}
If $\lambda/k\rightarrow 0$ as $k\rightarrow\infty$, it follows as in Section \ref{subsubsec:asymp_rwmh} that
\begin{equation}
AR^{(HMC)}\asymp
\frac{(2\pi)^{k/2}}{M^k_k}\left\{1-\Phi\left(\sqrt{\frac{k}{2}}\right)\right\},
\label{eq:HMC_asymp}
\end{equation}
which is of the same asymptotic form as (\ref{eq:RWMH_asymp}), corresponding to the RWMH acceptance rate.
On the other hand, if $\lambda/k\rightarrow \infty$ as $k\rightarrow\infty$, then it follows that
\begin{equation}
AR^{(HMC)}\asymp
\frac{(2\pi)^{k/2}}{M^k_k}\left\{1-\Phi\left(\frac{\sqrt{\frac{k}{2}\left(1+\frac{\lambda}{k}\right)}}
{\sqrt{2}\sqrt{\frac{1}{M_k\d t^2}+1}}
\right)\right\},
\label{eq:HMC_asymp2}
\end{equation}
which clearly tends to zero at a much faster rate compared to (\ref{eq:HMC_asymp}).

To summarize, if $\lambda/k\rightarrow 0$ as $k\rightarrow\infty$, then both HMC and RWMH have the
same asymptotic acceptance rate, tending to zero much faster than that of additive TMCMC. On the other hand,
if $\lambda/k\rightarrow\infty$ as $k\rightarrow\infty$, the acceptance rate of HMC tends to zero much faster
than that of RWMH, while that of additive TMCMC maintains its slowest convergence rate to zero. 
Also observe that the above conclusions will continue to hold if $m_k$ and $M_k$ tend to finite positive
constants satisfying $M_k\asymp m_k$ and $k\left\vert\frac{m_k}{M_k}-1\right\vert\rightarrow 0$ as 
$k\rightarrow\infty$.


\section{Generalized Gibbs/Metropolis approaches and comparisons with TMCMC}
\label{sec:liu}


It is important to make it clear at the outset of this discussion that the goals of 
TMCMC and generalized Gibbs/Metropolis
methods are different, even though both use moves based on transformations. 
While the strength of the latter lies in improving mixing of the standard Gibbs/MH algorithms
by adding transformation-based steps to the underlying collection of usual Gibbs/MH steps, 
TMCMC is an altogether general method
of simulating from the target distribution which does not require any underlying step of Gibbs or
MH.  

The generalized Gibbs/MH methods work in the following manner.
Suppose that an underlying Gibbs or MH algorithm for exploring a target distribution
has poor mixing properties. Then in order to improve mixing, one may consider some suitable transformation
of the random variables being updated such that mixing is improved under the transformation.
Such a transformation needs to chosen carefully since it is important to ensure that 
invariance of the Markov chain is preserved under the transformation.
It is convenient to begin with an overview of the generalized Gibbs method
with a sequential updating scheme and then proceed to the discussion on the 
issues and the importance of the block updating idea in the context of improving
mixing of standard Gibbs/MH methods.

\ctn{Liu00} (see also \ctn{Liu99}) propose simulation of a transformation 
from some appropriate probability distribution, and then applying the transformation
to the co-ordinate to be updated. For example, in a $d$-dimensional
target distribution, for updating $\bx=(x_1,x_2,\ldots,x_d)$
to $\bx'=(x'_1,x_2,\ldots,x_d)$, using an additive transformation, 
one can select $\e$ from some appropriate distribution
and set $x'_1=x_1+\e$. Similarly, if a scale transformation is desired, then 
one can set $x'_1=\gamma x_1$, where $\gamma$
must be sampled from some suitable distribution. The suitable distributions of $\e$ and
$\gamma$ are chosen such that the target distribution is invariant with respect to
the move $\bx'$, the forms of which are provided in \ctn{Liu00}. 
For instance, if $\pi(\cdot)$ denotes the target distribution, then for the additive
transformation, $\e$ may be sampled from $\pi(x_1+\e,x_2,\ldots,x_d)$, and
for the multiplicative transformation, one may sample $\gamma$ 
from $|\gamma|\pi(\gamma x_1,x_2,\ldots,x_d)$.
Since direct sampling from such distributions may be impossible, \ctn{Liu00} suggest
a Metropolis-type move with respect to a transformation-invariant transition kernel.

Thus, in the generalized Gibbs method, sequentially all the variables must be updated,
unlike TMCMC, where all the variables can be updated simultaneously in a single block.
Here we note that for irreducibility issues the generalized Gibbs approach is not suitable for
updating the variables blockwise using some transformation that acts on all the variables in a given block. 
To consider a simple example, with say, $d=2$ and a single block consisting of both the variables, if
one considers the additive transformation, then starting with $\bx=(x_1,x_2)$, where $x_1<x_2$,
one can not ever reach $\bx'=(x'_1,x'_2)$, where $x'_1>0, x'_2<0$. This is because $x'_1=x_1+z$
and $x'_2=x_2+z$, for some $z$, and $x'_1>0,x'_2<0$ implies $z>-x_1$ and $z<-x_2$, which is a contradiction.
The scale transformation implies the move 
$\bx=(x_1,\ldots,x_d)\rightarrow (\gamma x_1,\ldots,\gamma x_d)=\bx'$. If
one initializes the Markov chain with all components positive,
for instance, then in every iteration, all the variables will have the same sign.
The spaces where some variables are positive and some negative will never be visited,
even if those spaces have positive (in fact, high) probabilities under the target distribution.
This shows that the Markov chain is not irreducible.
In fact, with the aforementioned approach, no transformation,
whatever distribution they are generated from, can guarantee irreducibility in general if 
blockwise updates using the transformation strategy of generalized Gibbs is used. 

Although blockwise transformations are proposed in \ctn{Liu00} (see also \ctn{Kou05} who propose
a MH-based rule for blockwise transformation), they are meant for a different purpose than that
discussed above. The strength of such blockwise transformations lies in improving the mixing behaviour
of standard Gibbs or MH algorithms.
Suppose that an underlying Gibbs or MH algorithm for exploring a target distribution
has poor mixing properties. Then in order to improve mixing, one may consider some suitable transformation
of the set of random variables being updated such that mixing is improved under the transformation.
This additional step involving transformation of the block of random variables can be obtained
by selecting a transformation from the appropriate probability distribution provided in \ctn{Liu00}.
This ``appropriate" probability distribution guarantees that stationarity of the transformed block 
of random variables is preserved. Examples reported in \ctn{Liu00}, \ctn{Muller05}, \ctn{Kou05}, etc. 
demonstrate that this 
transformation also improves the mixing behaviour of the
chain, as desired. 

Thus, to improve mixing using the methods of \ctn{Liu00} or \ctn{Kou05} one needs to run the usual Gibbs/MH 
steps, with an additional step involving transformations as discussed above. This additional step
induces more computational burden compared to the standard Gibbs/MH steps, but improved mixing
may compensate for the extra computational labour. In very high dimensions, of course, this
need not be a convenient approach since computational complexity usually makes standard Gibbs/MH approaches
infeasible. Since the additional transformation-based step works on the samples generated by
standard Gibbs/MH, impracticality of the latter implies that the extra transformation-based step of \ctn{Liu00}
for improving mixing is of little value in such cases.

It is important to point out that the generalized Gibbs/MH methods can be usefully employed by
even TMCMC to further improve its mixing properties. In other words, a step of generalized
Gibbs/MH can be added to the computational fast TMCMC. This additional step can significantly
improve the mixing properties of TMCMC. That TMCMC is much faster computationally than
standard Gibbs/MH methods imply that even in very high-dimensional situations the generalized
Gibbs/MH step can ve very much successful while working in conjunction with TMCMC.


\begin{table}[ht]
\begin{small}
 \begin{center}
\begin{tabular}{c|c|c||c|c|c} \hline
Flight no. &  Failure & Temp & Flight no. &  Failure & Temp\\ \hline
 14 & 1 & 53 & 2  & 1 &  70 \\
9  & 1 &  57 & 11  & 1 &  70 \\
23 &  1  & 58 & 6  & 0 &  72 \\
10 &  1 &  63  & 7  & 0 &  73 \\
1 &  0 &  66 & 16  & 0 &  75 \\
5 &  0 &  67 & 21  & 1 &  75 \\
13 &  0 &  67 & 19  & 0 &  76 \\
15  & 0 &  67 & 22  & 0 &  76 \\
4  & 0 &  68 & 12  & 0 &  78 \\
3  & 0 &  69 & 20  & 0 &  79 \\
8  & 0 &  70 & 18  & 0 &  81 \\
17  & 0 &  70 & & & \\ \hline
\end{tabular}
\end{center}
\caption{Challenger data. Temperature at flight time (degrees F) and failure of 
O-rings (1 stands for failure, 0 for success).}
\label{tab:challenger}
\end{small}
\end{table}

\section{Examples of TMCMC for discrete state spaces}
\label{sec:discrete_tmcmc}
The ideas developed in this paper are not confined to continuous target distributions, but also
to discrete cases. For the sake of illustration, we consider two examples below.
\begin{itemize}
\item[(i)] Consider an Ising model, where, for 
$i=1,\ldots,k$ $(k\geq 1)$, the discrete random variable $x_i$ takes the value $+1$ or $-1$
with positive probabilities. We then have
$\statesp = \{-1,1\}$. To implement TMCMC, consider the 
forward transformation 
$T(x_i,\e) = sgn(x_i+\e)$ with probability $p_i$, and choose the backward transformation as  
$T^b(x_i,\e) = sgn(x_i-\e)$ with probability $1-p_i$. Here $sgn(a)=\pm 1$ accordingly as $a>0$ or $a<0$, and $\Y = (1,\infty)$.
Note the difference with the continuous cases. Here even though neither of the transformations is 1-to-1
or onto, TMCMC works because of discreteness; the algorithm can easily be seen to satisfy detailed balance,
irreducibility and aperiodicity.
However, if $k=1$ with $x_1$ being the only variable, then, if $x_1=1$, it is possible to choose, 
with probability one, the backward move-type, yielding
$T^b(x_1,\e)=-1$. On the other hand, if $x_1=-1$, with probability one, we can choose the forward
move-type, yielding $T(x_1,\e)=1$. 
Only $2^k$ move-types are necessary
for the $k$-dimensional case for one-step irreducibility.
In discrete cases,
however, there will be no Jacobian of transformation, thereby simplifying the acceptance ratio.

\item[(ii)] For discrete state spaces like $\mathbb Z^k$, ($\mathbb Z = \{0,\pm1,\pm2,\ldots\}$) the additive transformation with single epsilon does not work. For example, with $ k =2$, if the starting state is $(1,2)$ then the chain will never reach any states $(x,y)$ where $x$ and $y$ have same parity (i.e. both even or both odd) resulting a reducible Markov chain. Thus in this case we need to have more move-types than $2^k$. For example, with some positive probability (say $r$) we may select a random coordinate and update it leaving other states unchanged. With the remaining probability (i.e. $1-r$) we may do the analogous version of the additive transformation:\\
 Let $\Y=[1,\infty)$. Then, can choose the forward transformation for each coordinate as
$T_i(x_i,\e)=x_i+[\e]$ and the backward transformation as $T^b_i(x_i,\e)=x_i-[\e]$, where $[a]$ denotes
the largest integer not exceeding $a$.

This chain is clearly ergodic and we still need only \emph{one} epsilon to update the states. 

\end{itemize}
However, in discrete cases, TMCMC reduces to Metropolis-Hastings with a mixture proposal.
But it is important to note that the implementation is much efficient and computationally cheap when TMCMC-based
methodologies developed in this paper, are used.

\end{document}